\newcommand{\ea}{\emph{et al.}\xspace}
\newcommand{\ph}{\ensuremath{\overline{p_{j}}}\xspace}
\newcommand{\pb}{\ensuremath{\underline{p_{j}}}\xspace}
\newcommand{\phm}{\overline{p_{j}}}
\newcommand{\pbm}{\underline{p_{j}}}
\newcommand{\xrm}{x_{j}^{R}\xspace}
\newcommand{\xam}{x_{j}^{A}\xspace}
\newcommand{\xr}{$x_{j}^{R}$\xspace}
\newcommand{\xa}{$x_{j}^{A}$\xspace}
\newcommand{\algname}{\textsc{HLP-$b$}\xspace}
\newcommand{\pbname}{$(Pm,Pk)\mid \mathit{prec} \mid C_{\max}$\xspace}
\providecommand{\keywords}[1]{ \ \\ \small \textbf{\textit{Keywords---}} #1}
\newcommand\tab[1][1cm]{\hspace*{#1}}
\newtheorem{lemma}{Lemma}
\newtheorem{theorem}{Theorem}
\newtheorem{corollary}{Corollary}[theorem]
\newtheorem{hypo}{Hypothesis}
\author{Vincent Fagnon\textsuperscript{1,2}, Imed Kacem\textsuperscript{2}, Giorgio Lucarelli\textsuperscript{2} \\ and Bertrand Simon\textsuperscript{3}}
\date{}
\begin{document}

\title{Scheduling on Hybrid Platforms: Improved Approximability Window}
\maketitle
\makeatletter{
\renewcommand*{\@makefnmark}{}

\begin{abstract}
Modern platforms are using accelerators in conjunction with standard processing units in order to reduce the running time of specific operations, such as matrix operations, and improve their performance.
Scheduling on such hybrid platforms is a challenging problem since the algorithms used for the case of homogeneous resources do not adapt well.
In this paper we consider the problem of scheduling a set of tasks subject to precedence constraints on hybrid platforms, composed of two types of processing units.
We propose a $(3+2\sqrt{2})$-approximation algorithm and a conditional lower bound of 3 on the approximation ratio.
These results improve upon the 6-approximation algorithm proposed by Kedad-Sidhoum \ea as well as the lower bound of 2 due to Svensson for identical machines.
Our algorithm is inspired by the former one and distinguishes the allocation and the scheduling phases.
However, we propose a different allocation procedure which, although is less efficient for the allocation sub-problem, leads to an improved approximation ratio for the whole scheduling problem.
This approximation ratio actually decreases when the number of processing units of each type is close and matches the conditional lower bound when they are equal.

\keywords{approximation algorithms \and scheduling \and precedence constrains \and CPU/GPU}
\end{abstract}

\footnotetext{\textsuperscript{1} Univ. Grenoble Alpes, CNRS, Inria, Grenoble INP, LIG, 38000 Grenoble, France \\ \tab \tab vincent.fagnon@univ-grenoble-alpes.fr}
\footnotetext{\textsuperscript{2} LCOMS, University of Lorraine, Metz, France \\ \tab \tab \{first\_name.last\_name\}@univ-lorraine.fr }
\footnotetext{\textsuperscript{3} Universität Bremen, Bremen, Germany \\ \tab \tab bsimon@uni-bremen.de}}
\newpage


\section{Introduction}

Nowadays, more and more High Performance Computing platforms use special purpose processors in conjunction with classical Central Processing Units (CPUs) in order to accelerate specific operations and improve their performance. A typical example is the use of modern Graphics Processing Units (GPUs) which can accelerate vector and matrix operations.

Due to the heterogeneity that introduce this kind of accelerators, the scheduling problem on such hybrid platforms becomes more challenging. Several experimental results as well as theoretical lower bounds \cite{online4} show that the decision of the allocation of a task to the type of processors is crucial for the performance of the system. Specifically, classical greedy policies, such as Graham's List Scheduling \cite{Graham69}, which perform well in the case of identical computing resources, fail to generalize on hybrid platforms. For this reason, all known algorithms for hybrid platforms \cite{online4,online2,ChenYZ14,Kedad-Sidhoum} choose the type of the resource for each task before deciding its scheduling in the time horizon.

In this paper, we focus on the problem of scheduling an application on such an hybrid platform consisting of $m$ identical CPUs and $k$ identical GPUs.
An application is described as a set of $n$ mono-processor tasks $V$ which are linked through precedence dependencies described by a directed acyclic graph $G=(V,E)$. This means that a task can start being executed only after all of its predecessors are completed. The processing time of task $j$ on a CPU (resp. on a GPU) is denoted by \ph (resp. by \pb), and we do not assume any relation between \ph and \pb. This is justified in real systems where tasks performing for instance matrix operations can be executed much more efficiently on a GPU, while the execution of tasks which need to communicate often with the file system is faster on a CPU. Therefore, we can assume without loss of generality than $m\geq k$.

We are interested in designing polynomial-time algorithms with good performance guarantees in the worst case. As performance measure we use the well-known approximation ratio which compares the solution of an algorithm and the optimal solution with respect to an objective function. In this paper, we study the \emph{makespan} objective, that is we aim at minimizing the completion time of the last task. Extending the Graham notation, we will denote this problem as \pbname.

For this problem, a 6-approximation algorithm named HLP (Heterogeneous Linear Program) has been proposed by Kedad-Sidhoum \ea \cite{Kedad-Sidhoum}. This algorithm has two phases. In the first phase a ``good'' allocation of each task either on the CPU or on the GPU side is decided. This decision is based on an integer linear program which uses a 0-1 decision variable $x_j$ for each task $j$: $x_j$ will be  equal to one if $j$ is assigned to the CPU side, and to zero otherwise. This integer linear program does not model the whole scheduling problem but only the allocation decision, trying to balance the average load on the CPUs and GPUs as well as the critical path length. The fractional relaxation of this program is solved and the allocation of each task $j$ is determined by a simple rounding rule: it is assigned to GPUs if $x_j<1/2$, and to CPUs otherwise. In the second phase, the greedy List Scheduling algorithm is used to schedule the tasks respecting the precedence constraints and the allocation defined in the first phase.

The authors in \cite{Kedad-Sidhoum} prove that the value of $1/2$ chosen is best possible with respect to the linear program used in the first phase. In a sense, they prove that the integrality gap of the linear program relaxation is 2. Furthermore, given this simple rounding rule based on $1/2$, Amaris \ea \cite{online4} present a tight example of HLP which asymptotically attains an approximation ratio of 6, even if another scheduling algorithm is used in the second phase. Despite both previous negative results, we show that HLP can achieve a better approximation ratio by using a different rounding procedure. Indeed, even though we use a rounding which is not the best possible with respect to the allocation problem solved in the first phase, this rounding allows us to obtain stronger guarantees on the scheduling phase and therefore improve the approximation ratio. 
The main difference with HLP is that we allocate task $j$ to the fastest processor type if $x_j$ is close to $1/2$ in the fractional relaxation solution. We then achieve an approximation ratio smaller than $3+2\sqrt{2}$ and that tends towards 3 when $m/k$ is close to $1$.

The best known lower bound on the approximation ratio is the same as for identical machines, {\it i.e.}, $4/3$~\cite{lenstra1978}, but can  be improved to 2 by assuming a variant of the unique games conjecture~\cite{svensson}. Our second contribution is to improve this conditional lower bound to 3 for any value of $m/k$ assuming a stronger conjecture introduced by Bazzi and Norouzi-Fard~\cite{bazzi}. 
This conditional lower bound is therefore tight when $m=k$.

\subsection*{Organization of the paper}

In Section~\ref{sec:related} we give a literature review by positioning our problem with respect to closely related ones and by presenting several known approximability results.
In Section~\ref{sec:algo} we present our adapted algorithm for the problem of scheduling on hybrid platforms as well as its analysis which leads to an approximation ratio of 5.83. In \Cref{sec:inapprox}, we prove a conditional lower bound of 3 on the approximation ratio.
Finally, we conclude in Section~\ref{sec:conclusion}.


\section{Related Work}
\label{sec:related}

The problem of scheduling on hybrid platforms consisting of two sets of identical processors is a generalization of the classical problem of scheduling on parallel identical processors, denoted by $P \mid prec \mid C_{\max}$.
On the other hand, our problem is a special case of the problem of scheduling on unrelated processors (denoted by $R \mid prec \mid C_{\max}$), where each task has a different processing time on each processor.
Moreover, in the case of scheduling on related processors (denoted by $Q \mid prec \mid C_{\max}$), each processor has its specific speed and the processing time of each task depends on the speed of the assigned processor.
This problem is more general than $P \mid prec \mid C_{\max}$ in the sense that the processing time of a task is different on each processor.
However, in the former problem all tasks are accelerated or decelerated by the same factor when using a specific processor, while in our case two tasks does not necessarily have the same behavior (acceleration or deceleration) if they are scheduled on a CPU or a GPU.

For $P \mid prec \mid C_{\max}$, the greedy List Scheduling algorithm proposed by Graham~\cite{Graham69} achieves an approximation ratio of $(2-\frac{1}{m})$, where $m$ is the number of the processors.
Svensson~\cite{svensson} proved that this is the best possible approximation result that we can expect, assuming $P\neq NP$ and a variant of the unique games conjecture introduced by Bansal and Khot~\cite{bansal2009}.
Note that this negative result holds also for our more general problem.
For $Q \mid prec \mid C_{\max}$, a series of algorithms with logarithmic approximation ratios are known (see for example~\cite{Cheruki,ChudakShmoys}), while Li~\cite{shili} has recently proposed a $O(\log(m)/\log(\log(m)))$-approximation algorithm which is the current best known ratio.
On the negative side, Bazzi and Norouzi-Fard~\cite{bazzi} show that it is not possible to have a constant approximation ratio assuming the NP-hardness of some problems on $k$-partite graphs.
No result is actually known for $R \mid prec \mid C_{\max}$.
However, there are few approximation algorithms for special classes of precedence graphs (see for example~\cite{R-tree}).

For the problem \pbname, targeting hybrid platforms, Kedad-Sidhoum \ea \cite{Kedad-Sidhoum} presented a 6-approximation algorithm as we reported before by separating the allocation and the scheduling phases.
Amaris \ea \cite{online4} proposed small improvements on both phases, without improving upon the approximation ratio.
However, they show that using the rounding proposed in \cite{Kedad-Sidhoum}, any scheduling policy cannot lead to an approximation ratio strictly smaller than 6.
In the absence of precedence constraints, a polynomial time approximation scheme has been proposed by Bleuse \ea \cite{bleuse6}.

The problem of scheduling on hybrid platforms has been also studied in the online case.
If the tasks are not subject to precedence relations, then a 3.85-competitive algorithm has been proposed in~\cite{ChenYZ14}, while the authors show also that no online algorithm can have a competitive ratio strictly less than 2.
In the presence of precedence constraints, Amaris \ea \cite{online4} consider that tasks arrive in an online order respecting the precedence relations and they give a $(4\sqrt{m/k})$-competitive algorithm.
This result has been improved by Canon \ea \cite{online2} who provide a $(2\sqrt{m/k}+1)$-competitive algorithm, while they show that no online algorithm can have a competitive ratio smaller than $\sqrt{m/k}$.


\section{A 5.83-approximation Algorithm}
\label{sec:algo}
In this section we present the improved approximation algorithm and its analysis for the problem \pbname.
Although several ingredients of our algorithm have been already presented in \cite{Kedad-Sidhoum}, we present here all the steps of the algorithm for the sake of completeness.


\subsection{The Algorithm \algname}

As explained in introduction, the algorithm \algname has two phases: the allocation phase and the scheduling one.
The allocation phase is based on an integer linear program.
For each task $j \in V$, let $x_j$ be a decision variable which is equal to 1 if task $j$ is assigned to the CPU side, and to 0 otherwise.
Moreover, let $C_j$ be a variable corresponding to the completion time of task $j$.
Finally, let $C_{max}$ be a variable that indicates the maximum completion time over all tasks.
For the sake of simplicity, we add in $G$ a fictive task 0 with $\overline{p_0}=\underline{p_0}=0$ which precedes all other tasks.
Consider the following integer linear program similarly to Kedad-Sidhoum \ea \cite{Kedad-Sidhoum}.

\begin{align}
\textrm{Minimize }& C_{\max} \nonumber\\
\frac{1}{m} \sum_{j\in{V}} \phm x_{j} &\leq C_{max}  \label{c:loadCPU}\\
\frac{1}{k} \sum_{j\in{V}} \pbm (1-x_{j}) &\leq C_{max}  \label{c:loadGPU} \\
C_{i}+\phm x_{j}+\pbm(1-x_{j})&\leq C_{j} & \forall (i,j)\in E \label{c:CP}\\
0  \leq  C_{j}  &\leq  C_{max} & \forall j \in V \label{c:Cmax}\\
x_{j} &\in \{0,1\} & \forall j \in V \label{c:int}
\end{align}

Constraints~(\ref{c:loadCPU}) and~(\ref{c:loadGPU}) imply that the makespan of any schedule cannot be smaller than the average \emph{load} on the CPU and GPU sides, respectively.
Constraints~(\ref{c:CP}) and~(\ref{c:Cmax}) build up the \emph{critical path} of the precedence graph, i.e., the path of $G$ with the longest total completion time.
In any schedule, the critical path length is a lower bound of the makespan.
Note that the critical path of the input instance cannot be defined before the allocation decision for all tasks since the exact processing time of a task depends on this allocation.
Constraint~\ref{c:int} is the integrality constraint for the decision variable $x_j$.
In what follows, we relax the integrality constraint and we replace it by $x_j \in [0,1]$ for each task $j$ in $V$, in order to get a linear program which we can solve in polynomial time.
The above integer linear program is not completely equivalent to our scheduling problem, but the objective value of its optimal solution is a lower bound of any optimal schedule.

The rounding procedure of \algname is based on a parameter $b\geq 2$. We will show in \Cref{subsec:analysisalgo} that the best choice is $b= 1+\sqrt{\frac{2-k/m}{1-k/m}}$.
Let \xr be the value of the decision variable for task $j$ in an optimal solution of the above linear program relaxation.
We define \xa to be the value of the decision variable for task $j$ in our algorithm's schedule, that is the value of the decision variable obtained by the rounding procedure.
The allocation phase of our algorithm rounds the optimal relaxed solution $\{\xrm\}$ to the feasible solution $\{\xam\}$ as follows:
\begin{itemize}
\item if $\xrm \geq 1- \frac{1}{b}$, then $\xam = 1$;
\item if $\xrm \leq \frac{1}{b}$, then $\xam = 0$;
\item if $\frac{1}{b} < \xrm < 1-\frac{1}{b}$ and $\overline{p_j} \geq \underline{p_j} $, then $\xam = 0$;
\item if $\frac{1}{b} < \xrm < 1-\frac{1}{b}$ and $\overline{p_j} < \underline{p_j} $, then $\xam = 1$.
\end{itemize}
Intuitively, if the linear program solution is close to an integer ($x_j \leq \frac{1}{b}$ or $x_j \geq 1-\frac{1}{b}$) then we follow its proposal, else we choose the processor type with the smallest processing time: the task is allocated to a CPU ({\it i.e.}, $\xam=1$),  if $\overline{p_j} < \underline{p_j}$ and to a GPU otherwise.

Given the allocation obtained by the previous procedure, \algname proceeds to the scheduling phase.
The classical List Scheduling algorithm is applied respecting the allocation $\{\xam\}$ and the precedence constraints: tasks are allocated to the earliest available processor of the correct type in a topological order.


\subsection{Analysis of the Algorithm \algname}
\label{subsec:analysisalgo}

We begin the analysis of \algname with some lemmas that are based on the rounding procedure.

\begin{lemma} \label{lemma:2}
For each task $j\in V$ we have $(1-\xam)\pbm\leq b \cdot (1-\xrm)\pbm$. 
\end{lemma}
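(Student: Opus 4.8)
The plan is to verify the inequality $(1-\xam)\pbm \le b\cdot(1-\xrm)\pbm$ by a straightforward case analysis on the rounding procedure, distinguishing exactly the four cases that define $\xam$ from $\xrm$. Since $\pbm \ge 0$ and $b \ge 2 > 0$, it suffices to compare $(1-\xam)$ with $b\cdot(1-\xrm)$, or rather to show $(1-\xam)\pbm \le b(1-\xrm)\pbm$ directly in each branch, being slightly careful in the branches where the sign of the comparison between $\phm$ and $\pbm$ matters.

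First I would handle the two ``follow the LP'' branches. If $\xrm \ge 1-\tfrac1b$, then $\xam = 1$, so the left-hand side is $0$, and since $1-\xrm \ge 0$ and $b>0$ the right-hand side is nonnegative; the inequality holds trivially. If $\xrm \le \tfrac1b$, then $\xam = 0$, so the left-hand side is $\pbm$ and the right-hand side is $b(1-\xrm)\pbm$; since $\xrm \le \tfrac1b$ we have $1-\xrm \ge 1-\tfrac1b \ge \tfrac1b$ (using $b\ge 2$), hence $b(1-\xrm) \ge 1$, and multiplying by $\pbm \ge 0$ gives the claim.

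Next I would treat the two ``middle'' branches where $\tfrac1b < \xrm < 1-\tfrac1b$. If $\phm \ge \pbm$ then $\xam = 0$, so the left side is $\pbm$; since $\xrm < 1-\tfrac1b$ we get $1-\xrm > \tfrac1b$, so $b(1-\xrm) > 1$, and again multiplying by $\pbm \ge 0$ yields $(1-\xam)\pbm = \pbm \le b(1-\xrm)\pbm$. If instead $\phm < \pbm$ then $\xam = 1$, the left-hand side is $0$, and the right-hand side is nonnegative, so the inequality holds.

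The only place where any care is needed is keeping the direction of the non-strict inequality correct when $\xrm$ sits exactly at a boundary value such as $\tfrac1b$; the rounding rule places such $\xrm$ in the ``$\xam=0$'' branch, and the computation $1-\xrm \ge 1-\tfrac1b \ge \tfrac1b$ still gives $b(1-\xrm)\ge 1$, so no issue arises. There is no substantive obstacle here: the lemma is an immediate consequence of the rounding definition together with $b \ge 2$ and $\pbm \ge 0$, and the whole argument is a short exhaustive check of the four cases.
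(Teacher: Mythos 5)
Your proof is correct and follows essentially the same argument as the paper: when $\xam=1$ the left-hand side vanishes, and when $\xam=0$ the rounding rule forces $\xrm\leq 1-\frac{1}{b}$, hence $b(1-\xrm)\geq 1$ and the inequality follows. The only difference is cosmetic — you split into the four rounding branches, whereas the paper groups them into the two cases $\xam=1$ and $\xam=0$.
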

\begin{proof}
Consider any task $j \in V$.
Note first that if $j$ is assigned to the CPU side by the algorithm then $\xam=1$ and the lemma directly holds since $\xrm \leq 1$.
Then, we assume that $j$ is assigned to the GPU side, that is $\xam=0$.
Hence, $\xrm \leq (1-\frac{1}{b})$.
Therefore, we conclude as $b \cdot (1-\xrm)\pbm \geq \pbm = (1-\xam)\pbm$.
\end{proof}

\begin{lemma} \label{lemma:3}
For each task $j \in V$ we have:
\begin{equation*}
\xam\phm + (1-\xam)\pbm \leq \frac{b}{b-1} (\xrm\phm + (1-\xrm) \pbm).
\end{equation*}
\end{lemma}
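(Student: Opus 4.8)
The plan is to do a case analysis on how the rounding procedure treats task $j$, exactly mirroring the four bullets that define $\xam$, and in each case bound the left-hand side $\xam\phm + (1-\xam)\pbm$ by $\frac{b}{b-1}$ times the ``fractional processing time'' $\xrm\phm + (1-\xrm)\pbm$. The key observation to carry throughout is that the rounding never increases the processing time by more than a factor $\frac{b}{b-1}$ relative to what the fractional solution ``pays'', and the worst case is precisely when $\xrm$ sits at the boundary $1-\frac1b$ (or $\frac1b$) of the rounding threshold.

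First I would treat the two ``follow the LP'' cases. If $\xrm \geq 1-\frac1b$ then $\xam=1$, so the left-hand side is $\phm$; since $\xrm \geq 1-\frac1b = \frac{b-1}{b}$ and $\pbm \geq 0$, we get $\xrm\phm + (1-\xrm)\pbm \geq \xrm \phm \geq \frac{b-1}{b}\phm$, i.e. $\phm \leq \frac{b}{b-1}(\xrm\phm+(1-\xrm)\pbm)$, as wanted. Symmetrically, if $\xrm \leq \frac1b$ then $\xam=0$, the left-hand side is $\pbm$, and $(1-\xrm) \geq \frac{b-1}{b}$ gives $\xrm\phm+(1-\xrm)\pbm \geq \frac{b-1}{b}\pbm$, which is again the claim.

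Next I would handle the two ``middle'' cases where $\frac1b < \xrm < 1-\frac1b$. If $\phm \geq \pbm$ then $\xam=0$ and the left-hand side is $\pbm$; since $1-\xrm > \frac1b$ we have $\xrm\phm + (1-\xrm)\pbm > (1-\xrm)\pbm > \frac1b \pbm$, but this only gives a factor $b$, not $\frac{b}{b-1}$ — so here I would instead use that $\phm \geq \pbm$ to write $\xrm\phm + (1-\xrm)\pbm \geq \xrm\pbm + (1-\xrm)\pbm = \pbm$, which trivially yields $\pbm \leq \frac{b}{b-1}(\xrm\phm+(1-\xrm)\pbm)$ since $\frac{b}{b-1} \geq 1$. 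The case $\phm < \pbm$ is symmetric: $\xam=1$, the left-hand side is $\phm$, and $\xrm\phm+(1-\xrm)\pbm \geq \xrm\phm+(1-\xrm)\phm = \phm$.

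I expect the only subtlety — the ``main obstacle'' — to be noticing that in the two middle cases the naive bound coming from the threshold $\frac1b$ is too weak (it gives only factor $b$), and that one must instead exploit the defining inequality between $\phm$ and $\pbm$ to get the much stronger factor-$1$ bound there. Once that is spotted, the tightest case is genuinely one of the LP-following cases at the boundary $\xrm = 1-\frac1b$, which produces exactly the factor $\frac{b}{b-1}$ claimed, so the constant in the statement cannot be improved by this argument.
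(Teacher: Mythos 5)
Your proof is correct and follows essentially the same route as the paper: the two LP-following cases give the factor $\frac{b}{b-1}$ exactly as in the paper, and your two middle subcases are just the paper's single observation $\xrm\phm+(1-\xrm)\pbm \geq \min(\phm,\pbm) = \xam\phm+(1-\xam)\pbm$ written out explicitly. Nothing is missing.
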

\begin{proof}
Consider any task $j \in V$.
We have the following three cases.

\begin{itemize}
\item If $\xrm \leq \frac{1}{b}$, then $\xam = 0$ and we have:
\begin{align*}
(1-\xrm)\pbm &\geq (1-\frac{1}{b}) (1-\xam)\pbm
\\
(1-\xrm)\pbm + \xrm\phm &\geq (1-\frac{1}{b}) \left((1-\xam)\pbm + \xam\phm\right).
\end{align*}

\item If $\xrm \geq \big( 1-\frac{1}{b} \big)$, then $\xam = 1$ and we have:
\begin{align*}
\xrm \phm &\geq (1-\frac{1}{b}) \xam\phm 
\\
(1-\xrm)\pbm + \xrm\phm &\geq (1-\frac{1}{b}) \left((1-\xam)\pbm + \xam\phm\right).
\end{align*}

\item If $\frac{1}{b} < \xrm < (1-\frac{1}{b})$, then we have:\\
  $$\xrm\phm + (1-\xrm)\pbm \geq \min(\phm,\pbm) 
 = \xam\phm + (1-\xam)\pbm.$$
\end{itemize}
Therefore, combining the three cases, we obtain the lemma as $b/(b-1)\geq 1$.
\end{proof}

Based on the three previous lemmas, the following theorem gives the approximation ratio of our algorithm \algname.

\begin{theorem}\label{thm:approx}
\algname achieves an approximation ratio of  $3+4\sqrt{\frac{1-k/m}{2-k/m}}$, which is upper bounded by $3+2\sqrt{2}\leq 5.83$.\end{theorem}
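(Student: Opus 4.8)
The plan is to bound the makespan of \algname's schedule by comparing it to three lower bounds on the optimal makespan $C^*_{\max}$: the average CPU load, the average GPU load, and the critical path length, all of which are captured by the value $C_{\max}^{LP}$ of the relaxed linear program and hence satisfy $C_{\max}^{LP} \le C^*_{\max}$. First I would invoke the standard List Scheduling argument (as in Graham, adapted to two processor types): at any moment in the produced schedule, either all processors of some type are busy, or the task currently delaying completion has a predecessor being executed. Decomposing the time horizon accordingly, the makespan is at most the sum of (i) the total busy ``area'' on each side divided by the number of processors of that side, plus (ii) the length of a chain of tasks (a path in $G$) each executed with its algorithm-assigned processing time. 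Concretely I expect an inequality of the shape
\begin{equation*}
C_{\max}^{ALG} \;\le\; \frac{1}{m}\sum_{j} \xam \phm \;+\; \frac{1}{k}\sum_{j}(1-\xam)\pbm \;+\; \sum_{j \in P}\bigl(\xam\phm + (1-\xam)\pbm\bigr),
\end{equation*}
for some path $P$ in $G$; the precise combination of the load terms depends on which side is saturated, and one has to be a little careful that the two load terms do not both appear with full weight simultaneously.

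The next step is to replace each algorithmic quantity by its fractional counterpart using the rounding lemmas. For the CPU load term, $\xam\phm \le b\,\xrm\phm$ holds trivially from the rounding rule (symmetric to Lemma~\ref{lemma:2}), so $\frac1m\sum_j \xam\phm \le b \cdot C_{\max}^{LP}$ by constraint~(\ref{c:loadCPU}). For the GPU load term, Lemma~\ref{lemma:2} gives $\frac1k\sum_j(1-\xam)\pbm \le b\cdot C_{\max}^{LP}$ by constraint~(\ref{c:loadGPU}). For the path term, Lemma~\ref{lemma:3} bounds each summand by $\frac{b}{b-1}$ times $\xrm\phm+(1-\xrm)\pbm$, and summing along the path and telescoping with constraints~(\ref{c:CP}) and~(\ref{c:Cmax}) yields $\sum_{j\in P}(\xam\phm+(1-\xam)\pbm) \le \frac{b}{b-1} C_{\max}^{LP}$. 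Combining, one gets $C_{\max}^{ALG} \le \bigl(\beta b + \frac{b}{b-1}\bigr) C^*_{\max}$ where $\beta$ encodes how the two load terms combine — here is where the $k/m$ dependence enters, since when $m$ and $k$ are close the saturated-side argument can be sharpened (a saturated GPU side forces many tasks on GPUs, which in turn limits the CPU load, and vice versa), effectively letting the two load contributions share rather than add.

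Finally I would optimize the resulting bound over the parameter $b \ge 2$. Differentiating $\beta b + \frac{b}{b-1}$ in $b$ and setting the derivative to zero gives the stated optimal $b = 1 + \sqrt{\frac{2-k/m}{1-k/m}}$, and substituting back produces the ratio $3 + 4\sqrt{\frac{1-k/m}{2-k/m}}$. Since $k \le m$, the argument of the square root lies in $[0,1/2)$, so this is at most $3 + 4\sqrt{1/2} = 3 + 2\sqrt{2} \le 5.83$, and it tends to $3$ as $k/m \to 1$. The main obstacle I anticipate is step one: getting the right refined List Scheduling inequality that genuinely exploits $m \approx k$ rather than the crude bound that would only give a ratio like $2b + \frac{b}{b-1}$. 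This requires a careful case analysis on the structure of the schedule's idle periods — identifying, when one side is saturated during an idle period of the other, a corresponding path whose tasks lie on the non-saturated side — and correctly accounting for the fictive task $0$ and the topological order used by List Scheduling. The load-replacement and the final one-variable optimization are routine by comparison.
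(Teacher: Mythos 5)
There is a genuine gap at the step you yourself flag as the main obstacle, and the fix is not the one you propose. The paper does \emph{not} sharpen the List Scheduling inequality: it uses exactly the crude Graham-type bound $C_{\max}^A \leq \frac{W_{CPU}^A}{m} + \frac{W_{GPU}^A}{k} + CP^A$ with both load terms at full weight. The $k/m$ dependence comes instead from a purely algebraic rewriting of the load terms,
\begin{equation*}
\frac{W_{CPU}^{A}}{m} + \frac{W_{GPU}^{A}}{k} \;=\; \frac{W_{CPU}^{A}+W_{GPU}^{A}}{m} \;+\; \frac{m-k}{mk}\, W_{GPU}^{A},
\end{equation*}
followed by applying the two rounding lemmas to \emph{different} pieces: the combined term is per task exactly $\xam\phm+(1-\xam)\pbm$, so Lemma~\ref{lemma:3} bounds it with factor $\frac{b}{b-1}$ and constraints~(\ref{c:loadCPU})--(\ref{c:loadGPU}) bound its fractional counterpart by $(m+k)C_{\max}^{R}$; only the residual GPU term, whose coefficient $\frac{m-k}{mk}$ vanishes as $m\to k$, pays the factor $b$ via Lemma~\ref{lemma:2}. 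Your plan instead bounds each load term separately by $b\,C_{\max}^{R}$ (using Lemma~\ref{lemma:2} and its CPU-symmetric analogue), which only yields $2b+\frac{b}{b-1}$; since the rounding requires $b\geq 2$, this is minimized at $b=2$ and gives ratio $6$, i.e.\ no improvement over HLP. The mechanism you invoke to do better --- a saturation-based case analysis letting the two load contributions ``share'' --- is left entirely unproven, and it is not how the claimed bound is obtained.

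Because that combination step is missing, the final optimization is also not actually carried out: you optimize $\beta b + \frac{b}{b-1}$ for an unspecified $\beta$ and assert the stated $b$ and ratio emerge, but without the explicit bound $\frac{C_{\max}^A}{C_{\max}^*} \leq \frac{b}{b-1}\cdot\frac{m+k}{m} + b\cdot\frac{m-k}{m} + \frac{b}{b-1}$ there is nothing to differentiate. The parts you do have --- the LP relaxation as a lower bound on $C_{\max}^*$, the use of Lemmas~\ref{lemma:2} and~\ref{lemma:3} to pass from $\xam$ to $\xrm$, the telescoping of constraints~(\ref{c:CP})--(\ref{c:Cmax}) along a path, and the elementary estimate $\sqrt{\tfrac{1-k/m}{2-k/m}} < \sqrt{1/2}$ --- match the paper, but the core idea that produces the $3+4\sqrt{\frac{1-k/m}{2-k/m}}$ ratio is absent.
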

\begin{proof}
We first define some additional notations.
In the algorithm's schedule, let $W_{CPU}^A$ (resp. $W_{GPU}^A$) be the total load over all CPUs (resp. GPUs), and let $CP^A$ be the value of the critical path length of $G$ after the allocation phase of \algname. Denoting by $\mathcal P$ the set of paths in $G$, these values equal:
\begin{align*}
&W_{CPU}^A=\sum_{ j \in V}\, \phm \xam; \qquad W_{GPU}^A=\sum_{ j \in V}\,\pbm(1-\xam); \\ 
&CP^A = \max_{p \in \mathcal{P}} \Big\{\sum_{j \in p}\, (\phm \xam + \pbm (1-\xam))\Big\}.
\end{align*}

In a similar way, we define $W_{CPU}^R$, $W_{GPU}^R$ and $CP^R$ as the total load on CPUs, the total load on GPUs and the critical path in an optimal solution of the linear program relaxation.
Furthermore, let $C_{\max}^A$, $C_{\max}^R$ and $C_{\max}^*$ be respectively the makespan of the schedule created by \algname, the objective value in an optimal solution of the linear program relaxation and the makespan of an optimal solution for our problem.
Following the same arguments as in \cite{online4,Kedad-Sidhoum}, as \algname is a List Scheduling algorithm, we have:

\begin{align*}
C_{max}^A & \leq  \frac{W_{CPU}^{A}}{m} + \frac{W_{GPU}^{A}}{k} + CP^{A}\\
& =  \frac{W_{CPU}^{A}+W_{GPU}^{A}}{m} +\frac{m-k}{mk} W_{GPU}^{A} + CP^{A}\\
& = \frac{1}{m} \sum_{j \in V} \left( \xam\phm + (1-\xam)\pbm  \right) + \frac{m-k}{mk} \sum_{j \in V} (1-\xam)\pbm \\
& \qquad\qquad + \max_{p \in \mathcal{P}} \Big\{\sum_{j \in p}\left( \xam\phm +  (1-\xam)\pbm\right)\Big\}
\end{align*}

Using \Cref{lemma:2,lemma:3}, we obtain:
\begin{align*}
C_{max}^A & \leq  \frac{b}{b-1} \frac{1}{m} \sum_{j \in V} \left( \xrm\phm + (1-\xrm) \pbm \right) + b \frac{m-k}{mk} \sum_{j \in V} (1-\xrm)\pbm \\
& \qquad\qquad + \frac{b}{b-1} \max_{p \in \mathcal{P}} \Big\{\sum_{j \in p}\left( \xrm\phm + (1-\xrm)\pbm\right)\Big\}\\
& =  \frac{b}{b-1} \frac{W_{CPU}^{R}+W_{GPU}^{R}}{m} + b \frac{m-k}{mk} W_{GPU}^{R} + \frac{b}{b-1} CP^{R}
\end{align*}

Now, the constraints~(\ref{c:loadCPU}) to~(\ref{c:Cmax}) of the linear program relaxation give us:
\begin{align*}
C_{max}^A & \leq  \frac{b}{b-1} \frac{mC_{max}^{R}+kC_{max}^{R}}{m} +b \frac{m-k}{mk} kC_{max}^{R} + \frac{b}{b-1} C_{max}^{R}.
\end{align*}

Since $C_{\max}^R \leq C_{\max}^*$ we get:
\begin{align*}
\frac{C_{max}^A}{C_{max}^{*}}
& \leq  \frac{b}{b-1} \cdot \frac{m+k}{m} + b \cdot \frac{m-k}{m} + \frac{b}{b-1}
 =  b + 2 \cdot \frac{b}{b-1}- \frac{m}{k}(b-\frac{b}{b-1}).
\end{align*}
This function reaches its minimum for $b = 1+\sqrt{\frac{2-k/m}{1-k/m}} > 1+\sqrt 2$, which gives: 
\begin{align*}
\frac{C_{max}^A}{C_{max}^{*}} \leq 3+4\sqrt{\frac{1-k/m}{2-k/m}} \leq 3+2\sqrt{2} \approx 5.83.
\end{align*}
\end{proof}


\section{Conditional lower bound on the approximation factor}
\label{sec:inapprox}
 
\newcommand{\mc}[1]{\ensuremath{\mathcal{#1}}\xspace}
\newcommand{\myk}{q\xspace}
 
In this section, we extend the results of Bazzi and Norouzi-Fard~\cite{bazzi} in our setting. Assuming \Cref{hyp} (see below), they show that it is NP-hard to approximate $Q \mid prec \mid C_{\max}$ within a constant factor. If we focus on only two types of processors, their result implies a lower bound of 2 on the approximation ratio, therefore not improving on Svensson's result~\cite{svensson}. We improve their result to obtain a conditional lower bound of 3 stated in \Cref{th:LB}, which therefore also holds in our more restricted setting \pbname in which the processing times on both processor types can be arbitrary. Due to lack of space, we do not discuss further the relevance of \Cref{hyp} or its link to the weaker Unique Games Conjecture and refer the reader to \cite{bazzi} for more details. 

\begin{theorem}
	\label{th:LB} Assuming Hypothesis 1 and $P\neq NP$, there exist no polynomial-time
$(3-\alpha)$-approximation, for any $\alpha>0$, for the problem \pbname, even if the processors are related.
\end{theorem}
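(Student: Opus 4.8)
The plan is to reduce from the related-machines inapproximability result of Bazzi and Norouzi-Fard, whose construction, under Hypothesis~1, produces instances of $Q \mid \mathit{prec} \mid C_{\max}$ that are NP-hard to approximate within a constant factor, but in a way that leaves a gap of more than $3$ between ``yes'' and ``no'' instances when the machine speeds are restricted to a few distinct values. The key point is that their hard instances essentially use only a bounded number of speed classes, so one can interpret them as hybrid instances: two speed classes give $(Pm,Pk)$ directly, and more classes can be folded down by a gap-amplification / padding argument so that only two effective speeds remain relevant to within the factor we want. First I would recall the precise quantitative statement of their reduction (the completeness value, the soundness value, and the number of speed classes used), and isolate the parameter regime where the soundness-to-completeness ratio exceeds $3$.

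Concretely, I would take the instance $\mathcal{I}$ produced by the Bazzi--Norouzi-Fard reduction on the $k$-partite graph problem of Hypothesis~1 and transform it into an instance $\mathcal{I}'$ of \pbname as follows: keep the precedence DAG, and replace each machine of speed $s$ by setting, for each task $j$, $\overline{p_j}$ and $\underline{p_j}$ to encode ``fast'' versus ``slow'' execution of $j$ — i.e.\ a task that the reduction wants on a fast machine gets a small processing time on the CPU side and a prohibitively large one on the GPU side, and vice versa. Because our problem allows $\overline{p_j}$ and $\underline{p_j}$ to be unrelated, we have full freedom to make the ``wrong-side'' time essentially $+\infty$ (a large polynomial), which rules out any schedule that deviates from the intended allocation without paying a cost exceeding the $3$ threshold. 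The number of CPUs $m$ and GPUs $k$ are chosen to match the cardinalities of the two machine groups in the reduced instance; since the statement must hold for \emph{any} value of $m/k$, I would additionally pad with dummy chains of tasks pinned to the less-used side (again via the $+\infty$ trick) so that the ratio $m/k$ can be tuned arbitrarily without affecting the makespan gap.

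With this transformation in hand, the proof splits into the usual two directions. For completeness, a ``yes'' instance of the Hypothesis~1 problem yields, through Bazzi--Norouzi-Fard, a schedule of small makespan $T$ on the related machines that respects the intended fast/slow assignment; this directly maps to a schedule of makespan $T$ for $\mathcal{I}'$, since every task runs on its cheap side. For soundness, I would argue the contrapositive: any schedule for $\mathcal{I}'$ of makespan below $3T$ cannot place any task on its expensive ($+\infty$) side, hence corresponds to a schedule of the related-machines instance respecting the speed assignment and of makespan below $3T$, contradicting the soundness guarantee of the original reduction. Combining both directions, a polynomial-time $(3-\alpha)$-approximation for \pbname would distinguish the two cases and thus decide the Hypothesis~1 problem, contradicting $P \neq NP$.

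The main obstacle I anticipate is the soundness analysis of the speed-class reduction: the Bazzi--Norouzi-Fard construction as stated may genuinely use more than two speeds, and a naive two-speed coarsening loses too much in the gap. The delicate part is to show that their hardness already holds (or can be boosted to hold) at a multiplicative gap strictly larger than $3$ using only two effective speed levels — this likely requires revisiting the parameters of their layered/$k$-partite gadget and possibly composing the reduction with itself, or exploiting that their ``no''-instance lower bound is already a super-constant factor so that projecting onto two speeds still leaves a factor above $3$. A secondary, more routine obstacle is verifying that the $m/k$-padding does not inadvertently create short schedules: one must check that the dummy chains are long enough to be infeasible below $3T$ only if executed on the wrong side, which is a straightforward but necessary bookkeeping step. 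Once the two-speed gap of more than $3$ is secured, the rest is a direct encoding argument.
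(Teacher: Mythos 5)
There is a genuine gap, and it sits exactly where you flag ``the main obstacle'': you never actually construct the two-speed instances with a gap of $3$, and the black-box route you propose cannot supply them. The paper itself points out that the Bazzi--Norouzi-Fard result, when restricted to two processor types, only yields a lower bound of $2$ --- their super-constant gap intrinsically needs many speed classes, so ``projecting onto two speeds'' or coarsening their instance does lose the gap, and no amount of $+\infty$-penalty encoding or padding of $m/k$ recovers it. The entire content of Theorem~\ref{th:LB} is a \emph{new} reduction from the $q$-partite problem of Hypothesis~\ref{hyp} directly to \pbname: for each triple of consecutive parts one creates GPU tasks (type $a$, $Qn-Q$ per vertex), CPU tasks (type $b$, $Qn^3$ per vertex) and GPU \emph{chains} of length $Q-2$ (type $c$), with $m\approx(1+Q\varepsilon)n^4$ CPUs and $k\approx(1+Q\varepsilon)n^2$ GPUs and processing times $\phm=n\pbm$ (so the machines are in fact related, which your $\pm\infty$ encoding would not give and which the theorem explicitly claims). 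The factor $3$ then comes from a pipelining argument: in the YES case the partition of each $V_i$ into $Q$ slices lets three consecutive parts be processed concurrently (type $b$ on the CPUs, type $a$ plus the staggered type-$c$ chains on the GPUs), giving makespan about $qQ/3$ (Lemma~\ref{lem:complete}); in the NO case, after discarding $\delta$-fractions at the start and end of each block, the edge-density property forces the blocks of types $A$, $B$, $C$ to run essentially sequentially, and load/chain-length bounds give $M_A,M_B,M_C\approx Q$ each, hence makespan about $qQ$ (Lemma~\ref{lem:sound}).

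Your completeness/soundness skeleton and the final ``a $(3-\alpha)$-approximation would decide the Hypothesis~\ref{hyp} problem'' step are fine, but they are the routine part. The step you defer --- ``show that their hardness can be boosted to a gap above $3$ with two effective speeds, possibly by composing the reduction with itself'' --- is precisely the missing construction, and there is no indication of how self-composition or re-parameterizing the BNF gadget would achieve it; without the explicit three-phase gadget and its pipelined schedule, the proof does not go through. (Separately, the claim ``for any value of $m/k$'' is not part of Theorem~\ref{th:LB} in the paper but of a corollary, which is where the $\infty$-penalty trick you describe is actually used, on top of the new gadget, not as a substitute for it.)
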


\begin{hypo}[$q$-partite problem]
\label{hyp}
For every small $\varepsilon,\delta>0$, and every constant integers $\myk, Q>1$, the following problem is NP-hard: given a $\myk$-partite graph $G_q=(V_1,\dots, V_{\myk},E_1,\dots,E_{\myk-1})$ with $|V_i| = n$ for all $1\leq i \leq \myk$ and $E_i$ being the set of edges between $V_i$ and $V_{i+1}$ for all $1\leq i < \myk$, distinguish between the two following cases:
\begin{compactitem}
	\item YES Case: every $V_i$ can be partitioned into $V_{i,0},\dots V_{i,Q-1}$, such that:
	\begin{compactitem}
		\item there is no edge between $V_{i,j_1}$ and $V_{i+1,j_2}$ for all $1\leq i< \myk$, $j_1>j_2$.
		\item $|V_{i,j}| \geq \frac{1-\varepsilon}{Q}n$, for all $1 \leq i \leq \myk$, $0 \leq j \leq Q-1$.
	\end{compactitem}
	\item NO Case: for every $1 \leq i < \myk$ and every two sets $S\subseteq V_i$, $T\subseteq V_{i+1}$ such that $|S|=|T|=\lfloor\delta n\rfloor$, there is an edge between $S$ and $T$.
\end{compactitem}
\end{hypo}

We start by fixing several values: an integer $\myk$ multiple of $3$, an integer $Q$, $\delta \leq 1/(2Q)$ and $\varepsilon\leq1/Q^2$.	
We consider the $\myk$-partite problem parameterized by $Q,\varepsilon,\delta$, which is assumed to be NP-hard under \Cref{hyp}.

\paragraph*{Reduction.} We define a reduction from $G_q=(V_1,\dots, V_{\myk},E_1,\dots,E_{\myk-1})$, a $\myk$-partite graph where for each $i$, $|V_i|=n>Q$, to a scheduling instance $\mc I$. The instance consists of $m=\left\lceil(1+Q\varepsilon)n^4\right\rceil$ CPUs and
$k=\left\lceil(1+Q\varepsilon)n^2\right\rceil$ GPUs and uses two types of tasks: CPU tasks verifying $\ph = n\pb = 1$, and GPU tasks verifying $\ph = n\pb = n$. The tasks and edges ({\it i.e.}, precedence constraints) are defined as follows. 
For each $0\leq z < \myk/3$, and for each:
\begin{compactitem}
	\item vertex $v\in V_{3z+1}$,
create a set $\mc J_{3z+1,v}$ of $Qn-Q$ GPU tasks (type $a$).

	\item vertex $v\in V_{3z+2}$,
create a set $\mc J_{3z+2,v}$ of $Qn^3$ CPU tasks (type $b$).

	\item vertex $v\in V_{3z+3}$,
create a set $\mc J_{3z+3,v}$ of $Q-2$ GPU tasks (type $c$) indexed
$J_{3z+3,v}^1,\dots J_{3z+3,v}^{Q-2}$,
	and an edge from $J_{3z+3,v}^\ell$ to $J_{3z+3,v}^{\ell+1}$ for $\ell$ from 1 to $Q-3$.

	\item edge $(v,w)\in E_i$, create all edges from the set $\mc J_{i,v}$ to the set $\mc J_{i+1,w}$.
\end{compactitem}

Intuitively, the tasks corresponding to each set $V_i$ of $G_{\myk}$ can be computed in $Q$ time slots. To achieve this, each set of type $b$ requires almost all the CPUs, each set of type $a$ requires almost all but $n$ GPUs, and each set of type $c$ requires $n$ GPUs. On a YES instance, it is possible to progress simultaneously on the tasks corresponding to three consecutive sets $V_i$, by pipe-lining the execution, thus obtaining a makespan close to $qQ/3$. For example, it is possible to execute $V_{i,1}$ at some time step, and then to execute $V_{i+1,1}$ and $V_{i,2}$ in parallel. On a NO instance, the tasks corresponding to each $V_i$ have to be scheduled almost independently, thus not efficiently using the processing power: there are too few GPUs to process a significant amount of CPU tasks, and CPUs are too slow to process GPU tasks. The minimum possible makespan is then close to $qQ$. The two following lemmas state these results formally.

\begin{lemma}[Completeness]
	\label{lem:complete}
	If $G_q$ corresponds to the YES case of the $\myk$-partite problem, then instance $\mc I$ admits a schedule of makespan $(\myk+3)Q/3$.
\end{lemma}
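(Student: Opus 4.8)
The plan is to exhibit an explicit pipelined schedule that processes the $\myk$ layers of $G_q$ in consecutive triples, advancing by exactly $Q$ time units per triple. First I would fix, for every $i$, the partition $V_{i,0},\dots,V_{i,Q-1}$ guaranteed by the YES case; since each part has size at least $\frac{1-\varepsilon}{Q}n$ and the $Q$ parts sum to $n$, every part also obeys $|V_{i,j}|\le n-(Q-1)\frac{1-\varepsilon}{Q}n=\frac{1+(Q-1)\varepsilon}{Q}n\le\frac{1+Q\varepsilon}{Q}n$. Group the layers three at a time: group $z$ (for $0\le z<\myk/3$) consists of the type-$a$ layer $3z+1$, the type-$b$ layer $3z+2$ and the type-$c$ layer $3z+3$. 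Re-index the pairs $(z,j)$, $0\le j<Q$, by the integer $s=zQ+j$, which ranges bijectively over $\{0,\dots,\myk Q/3-1\}$; I call such an $s$ a \emph{slice}, belonging to group $z=\lfloor s/Q\rfloor$ and carrying index $j=s\bmod Q$. The schedule for slice $s$ is then: during $[s,s+1)$ run all tasks of $\bigcup_{v\in V_{3z+1,j}}\mc J_{3z+1,v}$, each on its own GPU; during $[s+1,s+2)$ run all tasks of $\bigcup_{v\in V_{3z+2,j}}\mc J_{3z+2,v}$, each on its own CPU; during $[s+2,s+Q)$ run the $|V_{3z+3,j}|$ chains $\mc J_{3z+3,v}$, $v\in V_{3z+3,j}$, each chain on its own GPU at one unit-length task per time unit.

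Next I would verify precedence using the staircase property of the YES partition: an edge of $E_i$ joins $V_{i,j_1}$ to $V_{i+1,j_2}$ only if $j_1\le j_2$, so for any such edge $(v,w)$ every task of $\mc J_{i,v}$ must finish before every task of $\mc J_{i+1,w}$ starts, and this holds by inspection. Inside a group, the type-$a$ tasks of slice $zQ+j_1$ finish at $zQ+j_1+1\le zQ+j_2+1$, the start time of the type-$b$ tasks of slice $zQ+j_2$; the type-$b$ tasks of slice $zQ+j_1$ finish at $zQ+j_1+2\le zQ+j_2+2$, the start of the type-$c$ chains of slice $zQ+j_2$. Across groups, the type-$c$ chains of slice $zQ+j_1$ finish at $zQ+j_1+Q=(z+1)Q+j_1\le(z+1)Q+j_2$, the start of the type-$a$ tasks of slice $(z+1)Q+j_2$. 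The internal order $J_{3z+3,v}^{1}\to\cdots\to J_{3z+3,v}^{Q-2}$ of each type-$c$ chain holds by construction, $G$ has no other edges, and the fictive task $0$ is harmless.

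Then I would check the machine bounds and read off the makespan. Since all starts and completions are at integer times, it suffices to bound the load on $[t,t+1)$ for every integer $t\ge0$. During $[t,t+1)$ at most one slice, $s=t$, runs its type-$a$ tasks, using $|V_{3z+1,j}|\cdot Q(n-1)\le\frac{1+Q\varepsilon}{Q}n\cdot Q(n-1)=(1+Q\varepsilon)n(n-1)$ GPUs; at most one slice, $s=t-1$, runs its type-$b$ tasks, using $|V_{3z+2,j}|\cdot Qn^{3}\le(1+Q\varepsilon)n^{4}\le m$ CPUs; and the slices running a type-$c$ chain are exactly the $s\in\{t-Q+1,\dots,t-2\}$ that exist, at most $Q-2$ of them, using together at most $(Q-2)\frac{1+Q\varepsilon}{Q}n<n$ GPUs, since $(Q-2)(1+Q\varepsilon)\le(Q-2)+Q^{2}\varepsilon\le Q-1<Q$ by $\varepsilon\le1/Q^{2}$. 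Hence the CPU load never exceeds $m$ and the GPU load stays below $(1+Q\varepsilon)n(n-1)+n=(1+Q\varepsilon)n^{2}-Q\varepsilon n\le(1+Q\varepsilon)n^{2}\le k$, so the schedule is feasible; its last task is the final one of the type-$c$ chain of slice $s=\myk Q/3-1$, completing at $s+Q=\myk Q/3+Q-1=(\myk+3)Q/3-1\le(\myk+3)Q/3$, which proves the lemma.

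The only delicate point is the GPU bound, which is why the construction is tuned the way it is. Pipelining is essentially forced: running the $\myk/3$ type-$c$ layers one after another would already cost about $\frac{2}{3}\myk Q$, far above the target $\frac{1}{3}\myk Q+Q$, so up to $Q-2$ ``light'' type-$c$ chains must run simultaneously; one then has to see that their combined demand stays below $n$ GPUs so that it fits alongside the single ``heavy'' type-$a$ slice inside the $k=\lceil(1+Q\varepsilon)n^{2}\rceil$ GPUs. This is precisely what the chain length $Q-2$ (rather than $Q$) and the bound $\varepsilon\le1/Q^{2}$ are calibrated for; the precedence and CPU parts are then routine bookkeeping.
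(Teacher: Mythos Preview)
Your proof is correct and follows essentially the same approach as the paper: both construct the identical pipelined schedule in which the type-$a$, type-$b$, and type-$c$ tasks attached to the $j$-th part of group $z$ occupy the unit slots starting at times $zQ+j$, $zQ+j+1$, and $zQ+j+2,\dots,zQ+j+Q-1$ respectively, then verify precedence via the staircase property $j_1\le j_2$ and check the machine loads using the bound $|V_{i,j}|\le(1+Q\varepsilon)n/Q$. Your slice indexing $s=zQ+j$ and explicit per-interval load count are a cleaner repackaging of the paper's sets $S_{i,j}$ and diagonal batches $\mc T_t=\bigcup_{i+j=t}S_{i,j}$, but the underlying schedule, bounds, and makespan computation are the same.
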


\begin{proof}
Suppose that $G_q$ corresponds to a YES instance of the $\myk$-partite problem, and let $V_{i,j}$ for $1\leq i\leq \myk$ and $j <Q$ be the associated partition of the sets $V_i$.
Note that the size of any set $V_{i,j}$ of the partition is at most $(1+Q\varepsilon)n/Q$, since $\sum_{j=0}^{Q-1} |V_{i,j}|=|V_i|=n$ and, by definition, in a YES instance it holds that $|V_{i,j}| \geq \frac{1-\varepsilon}{Q}n$.
We next partition the tasks of \mc I into sets $S_{i,j}$.
For each $z$, $0\leq z < \myk/3$, and $j$, $0 \leq j \leq Q-1$, we define:
\begin{compactitem}
	\item type $A$: $S_{zQ+1,j} = \bigcup_{v\in V_{3z+1,j}} \mc J_{3z+1,v}$, and thus \\$|S_{zQ+1,j}| \leq (Qn-Q)(1+Q\varepsilon)n/Q \leq k (1-1/n)$.
	\item type $B$: $S_{zQ+2,j} = \bigcup_{v\in V_{3z+2,j}} \mc J_{3z+2,v}$, and thus\\ $|S_{zQ+2,j}| \leq Qn^3(1+Q\varepsilon)n/Q \leq (1+Q\varepsilon)n^4 \leq m$.
	\item type $C$: for $1\leq\ell \leq Q-2$, $S_{zQ+2 +\ell,j} = \bigcup_{v\in V_{3z+3,j}}  \{J^\ell_{3z+3,v}\}$, and thus \\$|S_{zQ+2+\ell,j}| = (1+Q\varepsilon)n/Q \leq k/(nQ)$.
\end{compactitem}

Let $\mc T_t$ be the union of all $S_{i,j}$ with $t=i+j$, $1\leq i \leq Q\myk /3$ and $0 \leq j \leq Q-1$.
We create a schedule for instance \mc I as follows: at the time slot $[t-1,t)$, we schedule the tasks of set $\mc T_t$.
A sketch of the beginning of this schedule is given in Table~\ref{tbl:schedule}.
The type and the number of machines (CPUs or GPUs) for executing each set of tasks $S_{i,j}$ is also given in this table.
Note that the tasks of the second triplet $\langle V_4,V_5,V_6 \rangle$ start executing from time slot $[Q,Q+1)$: specifically, $S_{Q+1,0}$ contains tasks in $V_4$.
Moreover, the execution of some tasks of the first triplet $\langle V_1,V_2,V_3 \rangle$ takes place after time $Q+1$: specifically, the last tasks in this triplet belong to the set $S_{Q,Q-1}$ and they are executed in the time slot $[2Q-2,2Q-1)$.
However, there is no a time slot in which 3 triplets are involved.

\begin{table}
\resizebox{\linewidth}{!}{
\begin{tabular}{c||c|c|c|c|c|c|c|c|c}
 & CPU & \multicolumn{8}{c}{GPU} \\
\hline
 & $m$ & $k(1-1/n)$ & $k/(nQ)$ & $k/(nQ)$ & $k/(nQ)$ & $\ldots$ & $k/(nQ)$ & $k/(nQ)$ & $k/(nQ)$ \\
\hline
\hline
$[0,1)$ & & $S_{1,0}$ & & & & & & &\\
$[1,2)$ & $S_{2,0}$ & $S_{1,1}$ & & & & & & \\
$[2,3)$ & $S_{2,1}$ & $S_{1,2}$ & $S_{3,0}$ & & & & & \\
$[3,4)$ & $S_{2,2}$ & $S_{1,3}$ & $S_{4,0}$ & $S_{3,1}$ & & & & \\
$[4,5)$ & $S_{2,3}$ & $S_{1,4}$ & $S_{5,0}$ & $S_{4,1}$ & $S_{3,2}$ & & & & \\
$\ldots$ & $\ldots$ & $\ldots$ & $\ldots$ & $\ldots$ & $\ldots$ & & & & \\
$[Q-1,Q)$ & $S_{2,Q-2}$ & $S_{1,Q-1}$ & $S_{Q,0}$ & $S_{Q-1,1}$ & $S_{Q-2,2}$ & ~$\ldots$~ & $S_{3,Q-3}$ & \\
$[Q,Q+1)$ & $S_{2,Q-1}$ & $S_{Q+1,0}$ & & $S_{Q,1}$ & $S_{Q-1,2}$ & $\ldots$ & $S_{4,Q-3}$ & $S_{3,Q-2}$\\
$[Q+1,Q+2)$ & $S_{Q+2,0}$ & $S_{Q+1,1}$ & & & $S_{Q,2}$ & $\ldots$ & $S_{5,Q-3}$ & $S_{4,Q-2}$ & $S_{3,Q-1}$ \\
$[Q+2,Q+3)$ & $S_{Q+2,1}$ & $S_{Q+1,2}$ & & & & $\ldots$ & $S_{6,Q-3}$ & $S_{5,Q-2}$ & $S_{4,Q-1}$ \\
$[Q+3,Q+4)$ & $S_{Q+2,2}$ & $S_{Q+1,3}$ & $S_{Q+3,0}$ & & & $\ldots$ & $S_{7,Q-3}$ & $S_{6,Q-2}$ & $S_{5,Q-1}$ \\
$[Q+4,Q+5)$ & $S_{Q+2,3}$ & $S_{Q+1,4}$ & $S_{Q+4,0}$ & $S_{Q+3,1}$ & & $\ldots$ & $S_{8,Q-3}$ & $S_{7,Q-2}$ & $S_{6,Q-1}$ \\
$\ldots$ & & \\
\end{tabular}}
\caption{A sketch of the beginning of the schedule for the tasks in \mc I.}
\label{tbl:schedule}
\end{table}

In the last time slot of the created schedule we execute the tasks in $\mc T_t$ with $t=i+j$, $i=Q\myk/3$ and $j=Q-1$.
Hence, the makespan is $Q\myk/3+Q-1 < Q\myk/3+Q$.
It remains to prove the feasibility of the created schedule: the precedence constraints are satisfied and there are enough machines to perform the assigned tasks at each time slot.

Consider first the precedence constraints inside each set $\mc J_{3z+3,v}$, $0\leq z < \myk/3$ and $v \in V_{3z+3}$, that is the arc from the task $J_{3z+3,v}^\ell$ to the task $J_{3z+3,v}^{\ell+1}$, for all $\ell$, $1 \leq \ell \leq Q-3$.
By construction, $J_{3z+3,v}^\ell \in S_{zQ+2+\ell,j}$ and $J_{3z+3,v}^{\ell+1} \in S_{zQ+2+\ell+1,j}$.
Thus, $J_{3z+3,v}^\ell$ is executed in the time slot $zQ+2+\ell+j$, while $J_{3z+3,v}^{\ell+1}$ in the time slot $zQ+2+\ell+1+j>zQ+2+\ell+j$, and hence this kind of precedence constraints are satisfied.

Consider now the precedence constraint from a task $J \in \mc J_{i,v}$ corresponding to $v \in V_{i,j_1} \subset V_i$ to a task $J' \in \mc J_{i+1,w}$ corresponding to $w \in V_{i+1,j_2} \subset V_{i+1}$.
By construction and due to the fact that $G_\myk$ is a YES instance, an arc from $J$ to $J'$ exists only if $j_1\leq j_2$.
Assume that $J$ belongs to the set $S_{i_1,j_1}$, while $J'$ belongs to the set $S_{i_2,j_2}$.
By the definition of the sets $S_{i,j}$, we have that $i_1 < i_2$.
Thus, $i_1+j_1 < i_2+j_2$ which means that $J$ is executed in a time slot before $J'$, and hence this kind of precedence constraints are also satisfied.

It remains to show that each set $\mc T_t$ is composed of at most $m$ CPU tasks and $k$ GPU tasks, so can be computed in a single time slot.
In a given set $\mc T_t$, there can be at most one set of type $A$, one set of type $B$ and $Q-2$ sets of type $C$.
As explained in the definition of the sets $S_{i,j}$, each set of type $B$ is composed of at most $m$ CPU tasks.
Moreover, each set of type $A$ is composed of at most $k(1-1/n)$ GPU tasks, while each of the $Q-2$ sets of type $C$ is composed of at most $k/nQ$ GPU tasks.
In total, there are $k(1-1/n) + (Q-2) k/nQ < k$ GPU tasks, and the lemma follows.
\end{proof}

\begin{lemma}[Soundness]
	\label{lem:sound}
	If $G_q$ corresponds to the NO case of the $\myk$-partite problem,  then
all schedules of instance $\mc I$ have a makespan at least
$f(Q)\myk Q$, where $f$ tends towards 1 when $Q$ grows.
\end{lemma}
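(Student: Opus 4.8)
The plan is to prove the soundness lemma by a potential/progress argument: in the NO case, at any point in time the schedule can only be working on tasks from a narrow band of consecutive layers $V_i$, because the dense-bipartite structure of the NO instance forces almost all tasks in layer $V_i$ to complete before a significant fraction of tasks in layer $V_{i+1}$ can start. First I would fix a schedule and, for each layer index $i$, look at the first time $\tau_i$ at which fewer than $\lfloor \delta n \rfloor$ tasks of $\mc J_{i,\cdot}$ (suitably counted, i.e. the ``remaining'' vertex-sets) are still unfinished; the NO condition says that until essentially all of layer $i$ is done, no more than $\lfloor\delta n\rfloor$ vertices' worth of layer $i+1$ can have been released, so the windows of activity for successive layers barely overlap and we get $\tau_{i+1} \gtrsim \tau_i + (\text{time to clear layer } i)$.

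Next I would lower bound, for each layer, the time needed to clear it, splitting by layer type. For a type-$b$ (CPU) layer at vertex $v$ we have $Qn^3$ CPU tasks of length $1$ each, summed over $\Theta(n)$ vertices gives $\Theta(Qn^4)$ unit work, and with $m=\lceil(1+Q\varepsilon)n^4\rceil$ CPUs this is $\approx Q$ units of time — but crucially these tasks could also be run on GPUs, where each takes time $n$, so I must argue that the $k=\Theta(n^2)$ GPUs can absorb only a negligible fraction of that work in the available time; since $\delta\leq 1/(2Q)$ and $\varepsilon\leq 1/Q^2$ the numbers are chosen so the GPU contribution is a lower-order term. Symmetrically, for type-$a$ and type-$c$ (GPU) layers the tasks have $\ph=n\pb$, so running them on CPUs is $n$ times slower and the $m=\Theta(n^4)$ CPUs cannot help meaningfully; and the type-$c$ chains of length $Q-2$ force at least $\approx Q/n\cdot\pb$... here I would be careful and just use the per-vertex GPU-load lower bound. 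Summing the $\approx Q$-per-layer lower bounds over the $\myk$ layers, accounting for the bounded overlap between consecutive windows (a constant number of layers, say $3$, can overlap, which only costs a constant factor absorbed into $f(Q)\to 1$), yields makespan $\geq f(Q)\,\myk Q$ with $f(Q)\to 1$ as $Q\to\infty$.

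The main obstacle I expect is making the ``layers can't overlap'' argument quantitatively tight enough that the loss is only a $1-o_Q(1)$ factor rather than a constant like $1/2$. The subtlety is that ``almost all of $V_i$ finished'' is not the same as ``all of $V_i$ finished'': a few laggard vertices in $V_i$ could in principle let a few vertices of $V_{i+1}$ start early, and these small leaks must be shown not to cascade. The right way to handle this is to track, layer by layer, the quantity ``number of $V_i$-vertices whose task set is not yet fully completed'' and show via the NO-case density that this stays large (at least $n - \lfloor\delta n\rfloor$) for a long time, during which the analogous quantity for $V_{i+1}$ is at most $\lfloor\delta n\rfloor$; choosing $\delta\leq 1/(2Q)$ ensures the accumulated slack over all $\myk$ layers is only an $O(\myk\delta n)$ fraction, negligible. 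Combined with the work lower bounds, I would conclude that $C_{\max}\geq (1-O(1/Q))\,\myk Q$, which is the claimed $f(Q)\,\myk Q$. Finally, feeding Lemmas~\ref{lem:complete} and~\ref{lem:sound} into the gap argument, the ratio between NO and YES makespans is $\frac{f(Q)\,\myk Q}{(\myk+3)Q/3} = \frac{3f(Q)\,\myk}{\myk+3}\to 3$ as $\myk,Q\to\infty$, giving Theorem~\ref{th:LB}.
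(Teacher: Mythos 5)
Your high-level plan coincides with the paper's: use the NO-case density to show that, apart from $O(\delta n)$ straggler vertices per layer, the task sets of consecutive layers cannot be processed concurrently (the paper implements your slack bookkeeping by discarding the first and last $\lceil\delta|S_i|\rceil$ tasks of each set $S_i$ and deriving a contradiction with the NO case whenever two adjacent sets overlap and one is of type $a$ or $b$), and then sum per-layer time bounds. The genuine gaps are in those per-layer bounds. You read the processing times backwards for CPU tasks: they satisfy $\overline{p_j}=n\underline{p_j}=1$, so on a GPU they take time $1/n$, not $n$; the reason GPUs cannot absorb a type-$b$ layer is that their aggregate rate on CPU tasks is $nk=\Theta(n^3)\ll m$, which is exactly the denominator $m+nk$ in the paper's bound $M_B$. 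More seriously, for type-$a$ layers your claim that the $m=\Theta(n^4)$ CPUs ``cannot help meaningfully'' is false as an aggregate statement: CPUs run GPU tasks at rate $1/n$ each, so their aggregate rate is $m/n=\Theta(n^3)\gg k=\Theta(n^2)$, and a combined load bound would only give $\approx Q/n$ instead of $\approx Q$. The argument that works (and that the paper uses) is a dichotomy: either every surviving type-$a$ task runs on a GPU, giving the load bound $M_A\geq\frac{Q-2}{Q+2}\,Q$, or some single type-$a$ task runs on a CPU and by itself occupies time $n>Q\geq M_A$ inside that layer's window --- this needs the hypothesis $n>Q$, which your sketch never invokes. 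Likewise, for type-$c$ layers any load bound is hopeless ($n(Q-2)$ unit GPU tasks against $\Theta(n^2)$ GPUs gives $\approx Q/n$); the bound $Q-2$ must come from the chain of a vertex none of whose tasks is a straggler, which exists because at most $2(2\delta n+2)<n$ tasks of these sets are discarded. Your ``per-vertex GPU-load lower bound'' is only valid if you make this critical-path argument, including the survival of one full chain, explicit.

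Finally, the parenthetical ``a constant number of layers, say 3, can overlap, which only costs a constant factor absorbed into $f(Q)\to1$'' cannot stand: a genuine factor-$3$ overlap would reduce what the argument proves to roughly $qQ/3$, i.e.\ the YES-case makespan, and the gap would vanish. What the NO case actually gives (and what your final paragraph correctly aims at) is that, after removing the $\delta$-fraction stragglers, consecutive sets do not overlap at all whenever one of them is of type $a$ or $b$; the only remaining concurrency is among the chain pieces within a single layer, and that is already accounted for by the chain bound. With these repairs your argument essentially becomes the paper's proof.
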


\begin{proof}
Suppose that $G_q$ corresponds to a NO instance of the $\myk$-partite
problem, and consider the following partition of the tasks of the
associated instance $\mc I$, for all $0\leq z < \myk/3$:

\begin{itemize}
	\item type $A$: $S_{Qz+1} := \bigcup_{v\in V_{3z+1}} \mc J_{3z+1,v}$, so $|S_{Qz+1}|=Qn^2-Qn = n(n-1)Q$.
	\item type $B$: $S_{Qz+2} := \bigcup_{v\in V_{3z+2}} \mc J_{3z+2,v}$, so $|S_{Qz+1}|=Qn^4$.
	\item type $C$: $S_{Qz+2 +\ell} := \bigcup_{v\in V_{3z+3}}  \{J^\ell_{3z+3,v}\}$, for $1\leq\ell \leq Q-2$, so $|S_{Qz+2+\ell}|=n$.
\end{itemize}

Consider a schedule of $\mc I$ minimizing the makespan, and discard a
fraction $2\delta$ of each set $S_i$ in the partition where $(i\mod Q)\in\{0,1,2,3\}$: the first $\lceil\delta|S_i|\rceil$ tasks to be
executed $S_i^s$ and the last $\lceil\delta|S_i|\rceil$ tasks to be executed $S_i^f$. Let $\mc R$ be the
pseudo-schedule obtained.

Suppose that there exists $i$ such that one task of $S_{i+1}$ is started
before all tasks of $S_i$ are completed, and at least one set among
$S_i$, $S_{i+1}$ is of type $A$ or $B$ ({\it i.e.}, $(i\mod Q)\in\{0,1,2\}$). Then, this means that there is no
edge between $S_i^f$ and $S_{i+1}^s$. Let $i'$ be such that the set
$S_i$ corresponds to vertices of $V_{i'}$, and let $V^f_{i'}$ be the set of vertices $v\in V_{i'}$ that verify $\mc J_{i',v}\cup S_i^f\neq\emptyset$. Define $V^s_{i'+1}$ analogously. 
By the definition of the sets $\mc J$, there is no edge between $V^f_{i'}$ and $V^s_{i'+1}$. As all $\mc J_{i',v}$ have the same size, we have $|V^f_{i'}|\geq |V_{i'}|\cdot|S_i^f|/|S_i|\geq \lfloor\delta n\rfloor$, and, similarly, $|V^s_{i'+1}|\geq\lfloor\delta n\rfloor$. This contradicts the
hypothesis that $G_q$ is a NO instance of the $\myk$-partite problem.
Therefore, in the pseudo-schedule $\mc R$, a task of a set $S_i$ of type $A$ or
$B$ cannot be executed concurrently with a task from the sets $S_{i-1}$ or $S_{i+1}$.

Hence, for any $z$, the set $S_{Qz+1}$ of type $A$  has to be
completed before the start of the set $S_{Qz+2}$ of type $B$, which
itself has to be completed before the start of the set $S_{Qz+3}$ of
type $C$. If $z<\myk/3-1$, the set $S_{Q(z+1)}$ of type $C$ has to be in
turn completed before the start of the set $S_{Q(z+1)+1}$ of type $A$.

Fix $z$ and consider the $Q-2$ sets of type $C$ associated to $V_{3z}$.
Among the $n(Q-2)$ tasks of these sets, at most $2(2\delta n+2) <
n=|V_{3z}|$ have been discarded, so there exists one vertex $v\in
V_{3z}$ for which none of the tasks $J^\ell_{3z,v}$, for $1\leq\ell\leq
Q-2$ has been discarded. Because these tasks form a chain and each task
needs a time 1 to be completed, the pseudo-schedule $R$ needs at least
a time $M_C = Q-2$ to schedule all the sets of type $C$ associated to
$V_{3z}$.

In the pseudo-schedule $\mc R$, if all tasks of a set $S_{Qz+1}$ of type $A$ are executed on
GPUs, this takes a time at least (recall that $n>Q$,
$\varepsilon\leq1/Q^2$ and $\delta\leq1/(2Q)$):
\begin{align*}
M_A &= \frac{|S_{Qz+1}| - |S_{Qz+1}^s|-|S_{Qz+1}^f|}{k}\\
 &\geq \frac{(1-2\delta) n(n-1)Q-2}{(1+Q\varepsilon)n^2+1} \\
&\geq \frac{(1-\frac1Q) (1-\frac 1n)-\frac 2 {Qn^2}}{(1+\frac 1Q)+\frac 1 {n^2}}~ Q \\
&\geq \frac{Q-2}{Q+2}~Q.
\end{align*}

If a task of a set $A$ is executed on a CPU, this takes a time $n>Q\geq M_A$.

A set $S_{Qz+2}$ of type $B$ has to be scheduled on all CPUs and GPUs in time at least:
\begin{align*}
M_B &= \frac{|S_{Qz+2}| - |S_{Qz+2}^s|-|S_{Qz+2}^f|}{m+n\cdot k}\\
 &\geq \frac{(1-2\delta) Qn^4-2}{(1+Q\varepsilon)n^4+1 + n((1+Q\varepsilon)n^2+1)} \\
&\geq \frac{1-\frac 1Q-\frac{2}{Qn^4}}{1+\frac1Q+\frac 1{n^4} + \frac 1n +\frac{1}{nQ} + \frac 1 {n^3} }~ Q \\
&\geq \frac{Q-2}{Q+3}~Q.
\end{align*}

Therefore, the makespan of $\mc R$ is at least:
\begin{align*} 
\frac \myk3\left(M_A + M_B + M_C\right) &\geq 
\frac{\myk Q}{3} ~ \left( \frac{Q-2}{Q+2} + \frac{Q-2}{Q+3} +\frac {Q-2}Q\right). 
\end{align*}

As the expression in parentheses tends towards $3$ when $Q$ grows, and
the makespan of $\mc R$ is not larger than the minimum makespan to
schedule the instance $\mc I$, the lemma holds.
\end{proof}
	
We are now ready to complete the proof.
	
\begin{proof}[Proof (Proof of \Cref{th:LB})]
Let $\alpha>0$ and choose $\myk$ and $Q$ such that $f(Q)\myk Q > (3-\alpha)(\myk+3)Q/3$.
Consider an instance $G_q$ of the corresponding $\myk$-partite problem, with $n>Q$.
Because of \Cref{lem:complete,lem:sound}, if $G_q$ is a YES instance,
then its optimal makespan is at most $(\myk+3)Q/3$, and otherwise, its
makespan is at least $f(Q)\myk Q > (3-\alpha) (\myk+3)Q/3$.

Therefore, an algorithm approximating the scheduling problem within a
factor $3-\alpha$ also solves the $\myk$-partite problem in polynomial
time, which contradicts \Cref{hyp} and  $P\neq NP$.
\end{proof}

We can furthermore adapt this proof to show the following result:

\begin{corollary}
Assuming \Cref{hyp} and $P\neq NP$, the problem \pbname has no $3-\alpha$-approximation, for any $\alpha>0$ and any value of $m/k$.
\end{corollary}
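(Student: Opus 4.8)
The plan is to rerun the reduction behind \Cref{th:LB} almost verbatim, touching only quantitative parameters so that the machine ratio $m/k$ equals any prescribed rational $\rho\ge 1$ (recall $m\ge k$ may be assumed without loss of generality, and machine counts are integers, so treating rational $\rho\ge1$ suffices). The combinatorial skeleton — the $\myk$-partite graph $G_q$, the three task families $a,b,c$, the chains inside the type-$c$ families, and all precedence edges — stays unchanged, and so does the YES-case pipelined schedule of \Cref{lem:complete}, whose length $Q\myk/3+Q-1$ depends only on this skeleton. I would change exactly the following: keep $k=\Theta(n^2)$ as in \Cref{th:LB} — this order is forced by the type-$c$ families and by needing $M_A=\Theta(Q)$ — enlarging $k$ by an additive constant if needed so that $\rho k\in\mathbb{N}$; set the number of CPUs to $m=\rho k$; shrink each type-$b$ vertex-group from $Qn^3$ tasks down to $\Theta(\rho n)$ tasks, so that each type-$B$ set still consists of $\Theta(m)$ tasks and still just fits on the $m$ CPUs in a single unit time slot; and, crucially, enlarge the GPU processing time \pb of every type-$b$ task to a polynomial in $n$ larger than any attainable makespan (all makespans here are polynomially bounded), so that no polynomial-makespan schedule ever runs a type-$b$ task on a GPU.

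Completeness is then immediate: the type-$A$ and type-$C$ sets are unchanged and the type-$B$ sets are sized to fit on the $m$ CPUs, so the schedule of \Cref{lem:complete} stays feasible with makespan $(\myk+3)Q/3-1$. For soundness I would reuse \Cref{lem:sound}: its task-discarding step and the claim that two consecutive families cannot overlap in time use only the NO-instance edge density and the equal size of the groups $\mc J_{i,v}$ within each level — both untouched — so they carry over and force each triple to finish its type-$A$, then type-$B$, then type-$C$ tasks before the next triple starts. The bounds $M_A=Q(1-o(1))$ (a type-$a$ task costs $n>Q$ on a CPU, exactly as in \Cref{lem:sound}) and $M_C=Q-2$ are as before, and the enlarged \pb ensures no polynomial-makespan schedule places a type-$b$ task on a GPU, so a type-$B$ family of $\Theta(m)$ tasks needs time $M_B=Q(1-o(1))$ on the $m$ CPUs alone. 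Summing over the $\myk/3$ triples, the NO-case makespan is at least $\tfrac\myk3(M_A+M_B+M_C)=(3-o(1))\tfrac{\myk Q}{3}$ as $Q$ and $\myk$ grow, so the argument of \Cref{th:LB} applies verbatim and rules out a polynomial-time $(3-\alpha)$-approximation on these instances, which have $m/k=\rho$.

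I expect the one genuine obstacle — and the reason \Cref{th:LB} does not already yield every ratio — to be the GPU speed on type-$b$ tasks. In \Cref{th:LB} those tasks satisfy $n\pb=1$, so the type-$B$ family of a set has throughput $m+nk$; the term $nk$ is harmless there because $m=\Theta(n^4)\gg nk$, but it is exactly what blocks taking $m$ proportional to $k$: if $m=\Theta(k)$ then $nk$ dominates $m+nk$, forcing $\Theta(Qnk)$ type-$b$ tasks per family, which is far too many to fit on $\Theta(k)$ CPUs in the completeness schedule. Pushing $m/k$ down to a constant therefore requires making the GPUs useless for type-$b$ work, and the point to verify is that this harms neither direction: it does not hurt completeness, where type-$B$ families are already run on CPUs in \Cref{lem:complete}, nor soundness, where the GPUs contributed only a lower-order term to $M_B$. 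The leftover bookkeeping (enlarging $k$ by a constant so that $m/k=\rho$ exactly, and pinning down ``a polynomial exceeding every makespan'') is routine.
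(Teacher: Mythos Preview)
Your proposal is correct and follows essentially the same route as the paper: keep $k$ as in the main reduction, let $m$ realise the prescribed ratio, resize the type-$b$ vertex-groups so that each type-$B$ block has roughly $m$ tasks (the paper takes $n_b=\lfloor Qmn/k\rfloor$), and make GPUs useless for type-$b$ work so that $M_B$ is bounded below by the CPU-only throughput.

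The only noteworthy difference is how the last point is realised. You inflate $\pb$ for type-$b$ tasks to a super-makespan polynomial while leaving the GPU tasks with $\ph=n$; the paper is blunter and sets both off-type processing times to $\infty$ (CPU tasks get $\pb=\infty$, GPU tasks get $\ph=\infty$). Your version is marginally more parsimonious --- you correctly note that the original $M_A$ argument already handles a GPU task placed on a CPU via $\ph=n>Q$, so only the type-$b$ side needs patching --- while the paper's version makes the soundness sentence shorter (``tasks cannot be processed on the other resource type''). Either way the completeness schedule is unaffected, since it already runs each task on its intended processor type, and the soundness bound $M_A+M_B+M_C=(3-o(1))Q$ goes through.
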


\begin{proof}[Proof (Proof sketch)]
Define CPU tasks as $\ph=1$ and $\pb=\infty$, and GPU tasks as $\ph=\infty$ and $\pb=1$. The value of $k$ is the same as before, but we now consider any value of $m \geq k$, and we define the sets of type $b$ as containing $n_b = \lfloor Qmn/k\rfloor$ tasks instead of $Qn^3$. The completeness lemma is still valid as $(1+Q\varepsilon)n\cdot n_b\leq m$ and the soundness lemma holds as tasks cannot be processed on the other resource type. 
\end{proof}

This result is interesting as the competitive ratio of the algorithms known for \pbname both in the offline ($3+4\sqrt{\frac{1-k/m}{2-k/m}}$) and in the online ($1+2\sqrt{m/k}$~\cite{online2}) setting tend towards 3 when $m/k$ is close to 1, so there is no gap between the conditional lower bound and the upper bound for this case.
Note that this hardness result also holds if an oracle provides the allocation (CPU or GPU for each task), in which case List Scheduling is 3-competitive~\cite[Theorem 7]{online2}. Therefore, the gap between the conditional lower bound and the algorithm \algname is mainly due to the difficulty of the allocation.


\section{Conclusion}
\label{sec:conclusion}

We propose a $(3+2\sqrt{2})$-approximation algorithm \algname for the \pbname problem.
Our algorithm improves the approximation ratio upon the previous 6-approximation algorithm known in the literature, by using a different rounding procedure, which although is not optimal for the allocation phase, leads to a better worst-case ratio for the whole problem.
We also show a conditional lower bound of 3 on the approximation ratio for this problem, assuming a generalized variant of the unique games conjecture, improving over the previous result of 2.
The approximation ratio of \algname actually decreases towards 3 when $m$ and $k$ are close, thus closing the gap with the lower bound for $m=k$.
The natural objective would be to close this gap for all values of $m$ and $k$.


\renewcommand\refname{References} 
{\setstretch{0.8} 
\addcontentsline{toc}{section}{References} 
\bibliography{biblio.bib} 
}


\end{document}


\label{appendix}
\newcommand{\mc}[1]{\ensuremath{\mathcal{#1}}\xspace}
\newcommand{\myk}{q\xspace}

\setcounter{lemma}{3}
\begin{lemma}[Soundness]
	If $G_q$ corresponds to the NO case of the $\myk$-partite problem,  then
all schedules of instance $\mc I$ have a makespan at least
$f(Q)\myk Q$, where $f$ tends towards 1 when $Q$ grows.
\end{lemma}

\begin{proof}
Suppose that $G_q$ corresponds to a NO instance of the $\myk$-partite
problem, and consider the following partition of the tasks of the
associated instance $\mc I$, for all $0\leq z < \myk/3$:

\begin{itemize}
	\item type $A$: $S_{Qz+1} := \bigcup_{v\in V_{3z+1}} \mc J_{3z+1,v}$, so $|S_{Qz+1}|=Qn^2-Qn = n(n-1)Q$.
	\item type $B$: $S_{Qz+2} := \bigcup_{v\in V_{3z+2}} \mc J_{3z+2,v}$, so $|S_{Qz+1}|=Qn^4$.
	\item type $C$: $S_{Qz+2 +\ell} := \bigcup_{v\in V_{3z+3}}  \{J^\ell_{3z+3,v}\}$, for $1\leq\ell \leq Q-2$, so $|S_{Qz+2+\ell}|=n$.
\end{itemize}

Consider a schedule of $\mc I$ minimizing the makespan, and discard a
fraction $2\delta$ of each set $S_i$ in the partition where $(i\mod Q)\in\{0,1,2,3\}$: the first $\lceil\delta|S_i|\rceil$ tasks to be
executed $S_i^s$ and the last $\lceil\delta|S_i|\rceil$ tasks to be executed $S_i^f$. Let $\mc R$ be the
pseudo-schedule obtained.

Suppose that there exists $i$ such that one task of $S_{i+1}$ is started
before all tasks of $S_i$ are completed, and at least one set among
$S_i$, $S_{i+1}$ is of type $A$ or $B$ ({\it i.e.}, $(i\mod Q)\in\{0,1,2\}$). Then, this means that there is no
edge between $S_i^f$ and $S_{i+1}^s$. Let $i'$ be such that the set
$S_i$ corresponds to vertices of $V_{i'}$, and let $V^f_{i'}$ be the set of vertices $v\in V_{i'}$ that verify $\mc J_{i',v}\cup S_i^f\neq\emptyset$. Define $V^s_{i'+1}$ analogously. 
By the definition of the sets $\mc J$, there is no edge between $V^f_{i'}$ and $V^s_{i'+1}$. As all $\mc J_{i',v}$ have the same size, we have $|V^f_{i'}|\geq |V_{i'}|\cdot|S_i^f|/|S_i|\geq \lfloor\delta n\rfloor$, and, similarly, $|V^s_{i'+1}|\geq\lfloor\delta n\rfloor$. This contradicts the
hypothesis that $G_q$ is a NO instance of the $\myk$-partite problem.
Therefore, in the pseudo-schedule $\mc R$, a task of a set $S_i$ of type $A$ or
$B$ cannot be executed concurrently with a task from the sets $S_{i-1}$ or $S_{i+1}$.

Hence, for any $z$, the set $S_{Qz+1}$ of type $A$  has to be
completed before the start of the set $S_{Qz+2}$ of type $B$, which
itself has to be completed before the start of the set $S_{Qz+3}$ of
type $C$. If $z<\myk/3-1$, the set $S_{Q(z+1)}$ of type $C$ has to be in
turn completed before the start of the set $S_{Q(z+1)+1}$ of type $A$.

Fix $z$ and consider the $Q-2$ sets of type $C$ associated to $V_{3z}$.
Among the $n(Q-2)$ tasks of these sets, at most $2(2\delta n+2) <
n=|V_{3z}|$ have been discarded, so there exists one vertex $v\in
V_{3z}$ for which none of the tasks $J^\ell_{3z,v}$, for $1\leq\ell\leq
Q-2$ has been discarded. Because these tasks form a chain and each task
needs a time 1 to be completed, the pseudo-schedule $R$ needs at least
a time $M_C = Q-2$ to schedule all the sets of type $C$ associated to
$V_{3z}$.

In the pseudo-schedule $\mc R$, if all tasks of a set $S_{Qz+1}$ of type $A$ are executed on
GPUs, this takes a time at least (recall that $n>Q$,
$\varepsilon\leq1/Q^2$ and $\delta\leq1/(2Q)$):
\begin{align*}
M_A &= \frac{|S_{Qz+1}| - |S_{Qz+1}^s|-|S_{Qz+1}^f|}{k}\\
 &\geq \frac{(1-2\delta) n(n-1)Q-2}{(1+Q\varepsilon)n^2+1} \\
&\geq \frac{(1-\frac1Q) (1-\frac 1n)-\frac 2 {Qn^2}}{(1+\frac 1Q)+\frac 1 {n^2}}~ Q \\
&\geq \frac{Q-2}{Q+2}~Q.
\end{align*}

If a task of a set $A$ is executed on a CPU, this takes a time $n>Q\geq M_A$.

A set $S_{Qz+2}$ of type $B$ has to be scheduled on all CPUs and GPUs in time at least:
\begin{align*}
M_B &= \frac{|S_{Qz+2}| - |S_{Qz+2}^s|-|S_{Qz+2}^f|}{m+n\cdot k}\\
 &\geq \frac{(1-2\delta) Qn^4-2}{(1+Q\varepsilon)n^4+1 + n((1+Q\varepsilon)n^2+1)} \\
&\geq \frac{1-\frac 1Q-\frac{2}{Qn^4}}{1+\frac1Q+\frac 1{n^4} + \frac 1n +\frac{1}{nQ} + \frac 1 {n^3} }~ Q \\
&\geq \frac{Q-2}{Q+3}~Q.
\end{align*}

Therefore, the makespan of $\mc R$ is at least:
\begin{align*} 
\frac \myk3\left(M_A + M_B + M_C\right) &\geq 
\frac{\myk Q}{3} ~ \left( \frac{Q-2}{Q+2} + \frac{Q-2}{Q+3} +\frac {Q-2}Q\right). 
\end{align*}

As the expression in parentheses tends towards $3$ when $Q$ grows, and
the makespan of $\mc R$ is not larger than the minimum makespan to
schedule the instance $\mc I$, the lemma holds.
\qed
\end{proof}